\theoremstyle{plain}
\newtheorem{theorem}{Theorem}
\newtheorem{lemma}{Lemma}
\theoremstyle{definition}
\newtheorem*{example}{Example}
\newtheorem*{maindefinition}{Main definition}
\newtheorem*{2-period-queries}{2-Period Queries}
\newtheorem*{period-queries}{\boldmath$(1+\epsilon)$-Period Queries}
\newtheorem*{substring-compression-queries}{Substring Compression Queries}
\newtheorem*{longest-substring-repeat}{Longest Substring Repeat}
\DeclareRobustCommand{\cev}[1]{%
	{\mathpalette\do@cev{#1}}%
}
\newcommand{\do@cev}[2]{%
	\vbox{\offinterlineskip
		\sbox\z@{$\m@th#1 x$}%
		\ialign{##\cr
			\hidewidth\reflectbox{$\m@th#1\vec{}\mkern4mu$}\hidewidth\cr
			\noalign{\kern-\ht\z@}
			$\m@th#1#2$\cr
		}%
	}%
}
\newcommand\dd{\,..\,}
\DeclareMathOperator{\E}{\mathbb{E}}
\DeclareMathOperator{\size}{\mathsf{size}}
\title{Closed Repeats}
\author{
	\textsc{Dmitry Kosolobov}\\
	\normalsize Ural Federal University, Ekaterinburg, Russia\\
	\normalsize \texttt{dkosolobov@mail.ru}
}
\date{}
\begin{document}

\maketitle

\begin{abstract}
	Much research in stringology focuses on structures that can, in a way, ``grasp'' repeats (substrings that occur multiple times) as, for example, the so-called runs, a.k.a.\ maximal repetitions, compactly describe all tandem repeats. In this paper we introduce closed repeats: given a string $s$, its non-empty substring $s[i\dd j]$ is a right (left) closed repeat if its closest occurrence $s[i'\dd j']$ with $i' > i$ cannot be ``extended'' to the right (respectively, left) matching $s[j{+}1] = s[j'{+}1]$ (respectively,  $s[i{-}1] = s[i'{-}1]$); the repeat is closed if it is both left and right closed. We note that the closed repeats correspond to the maximal closed substrings recently proposed by Badkobeh et al.\ and they include all runs as a special case. We prove that the number of right/left closed repeats is $O(n \log n)$, where $n$ is the length of $s$, and we show that this bound is tight. The (right/left) closed repeats can be computed in the optimal time $O(n\log n)$; as we prove, the computation time cannot be lower than $\Omega(n\log\sigma)$ over a general ordered alphabet of size $\sigma$ even when the number of the closed repeats is $O(n)$. As an application, we describe data structures using the closed repeats for a number of substring queries: finding the period of the substring provided it is ``periodic'', finding the longest repeat in the substring, computing the rightmost LZ77 parsing of the substring.


\noindent\textbf{Keywords:} closed repeats, maximal repeats, repetitions, runs, closed words, LZ77, Lempel--Ziv
\end{abstract}

\algtext*{EndIf}
\algtext*{EndWhile}
\algtext*{EndFor}

\section{Introduction}

The fields of stringology and combinatorics on words have a long history of studying structures on strings that help to ``grasp'' repeats (substrings that occur at least twice). As a primary example, much research has been devoted to the so-called tandem repeats (or squares), which can be ``grasped'' by the so-called runs, a.k.a.\ maximal repetitions~\cite{KolpakovKucherov,MainLorentz}. Multiple other concepts were proposed to tackle problems related to non-tandem repeats, like maximal repeats~\cite{BlumerEtAl}, gapped repeats~\cite{KPPK}, subrepetitions~\cite{KPPK}, closed substrings~\cite{BDFP2,BDFP,Fici}, maximal exponent repeats~\cite{BadkobehCrochemoreToopsuwan}, and a few others. 

In this paper we propose a new combinatorial object: given a string $s$ of length $n$, its substring $s[i\dd j]$ is called a \emph{right (left) closed repeat} if its closest occurrence $s[i'\dd j']$ with $i' > i$ cannot be ``extended'' to the right (respectively, left) matching $s[j{+}1] = s[j'{+}1]$ (respectively, $s[i{-}1] = s[i'{-}1]$); the repeat is \emph{closed} if it is both left and right closed (a precise definition will be given below). While the right/left closed repeats are novel concepts, the concept of closed repeats actually is a different view on the recently introduced maximal closed substrings \cite{BDFP2,BDFP,BadkobehFiciLiptak}. As in~\cite{BDFP2,BDFP}, we point out that the runs is a particular case of the closed repeats. However, in contrast to runs~\cite{BannaiEtAl,KolpakovKucherov}, the number of closed repeats is not $O(n)$: we prove that there are at most $O(n\log n)$ right/left closed repeats\footnote{All logarithms in the paper are in base $2$.} (the same bound for closed repeats was essentially established in~\cite{BDFP2}) and this bound is tight since the random binary string has $\Omega(n \log n)$ closed repeats in expectation, which solves an open problem posed in~\cite{BDFP2,BDFP}. As a further distinction, whereas the runs can be calculated in $O(n)$ time on general ordered alphabets, we show that the closed repeats cannot be computed faster than in $\Omega(n\log\sigma)$ time on a general ordered alphabet of size $\sigma$ even when the number of closed repeats is $O(n)$.

Despite these somewhat ``negative'' results, the closed repeats have many advantages as a tool that ``nails down'' the repeats to their occurrences: as it is easy to observe, each occurrence of a repeat, except the rightmost one, can be ``extended'' to the right/left to a (right/left) closed repeat. It is akin to how the occurrences of tandem repeats can be extended to runs. We also note that the (right/left) closed repeats can be computed in the optimal time $O(n\log n)$ by a straightforward adaptation of the algorithm from \cite{BDFP2}, which originally was devised to find all maximal closed substrings (see details in Section~\ref{sec:computation}). As a demonstration for the utility of the closed repeats, we complement the paper with applications, not all of which are strictly novel but they serve as edifying examples, in our opinion. 

We propose three data structures that support certain queries on arbitrary substrings of a given string $s$ of length $n$. All our constructions take $O(n\log n)$ space (mainly to store the closed repeats). First, we can answer in $O(1)$ time whether the query substring is periodic, i.e., has period $p$ that fits in the substring twice, and we find the minimal period $p$ if so. This data structure is inferior to the state of the art~\cite{KRRW} in terms of space but the ease with which we achieve this query time might be of particular interest here. Second, for any constant $\epsilon > 0$ fixed in advance, we can compute in $O(\log^\epsilon n)$ time the longest repeat in the query substring (i.e., the longest string that occurs in the query substring at least twice). Apparently, this type of queries was not studied before. Again, the solution is quite simple but relies on a known complicated data structure for orthogonal range minima queries~\cite{Nekrich}. Third, we can compute the so-called rightmost LZ77 parsing of the query substring in $O(z\log\log\ell)$ time, where $z$ is the number of phrases in the parsing and $\ell$ is the length of the substring. The state of the art with this space~\cite{KKFL,KRRW,Zhou} is slightly slower: it takes $O(z\log\log n)$ time; but it can also achieve this time with $O(n\log\log n)$ space. However, this known solution, as all others, computes an LZ77 parsing that is not necessarily rightmost. The rightmost parsing is known to be noticeably superior in some cases~\cite{FerraginaNittoVenturini,KosolobovBitOpt,KVNP}. Apparently, ours is the first solution that computes the rightmost parsing. We note again that our solution is simple.

The paper is structured as follows. Section~\ref{sec:closed-repeats} defines the (right/left) closed repeats and relates them to some other well-known repetitive structures in strings. In Section~\ref{sec:combinatorics} we prove that there are at most $O(n\log n)$ (right/left) closed repeats and we investigate related combinatorics. In Section~\ref{sec:computation} we obtain the lower bound $\Omega(n\log\sigma)$ for the computation of all closed repeats over a general ordered alphabet of size $\sigma$; also a reference is provided to a known practical algorithm that can be easily adapted to find all (right/left) closed repeats in optimal time. Section~\ref{sec:applications} discusses applications. We conclude with open problems in Section~\ref{sec:conclusion}.

\section{Closed Repeats and Other Repetitive Structures}
\label{sec:closed-repeats}

For a string $s = c_1 c_2 \cdots c_{n}$, denote by $|s|$ its length $n$. We write $s[i]$ for the letter $c_i$ and $s[i \dd j]$ for the \emph{substring} $c_i c_{i + 1} \cdots c_j$, assuming $s[i \dd j]$ is empty if $i > j$. The \emph{empty string} is denoted by $\epsilon$. We also write $s[i\dd j)$ for $s[i\dd j{-}1]$. We say that a string $t$ \emph{occurs} in $s$ at position $i$ if $s[i\dd i{+}|t|) = t$. A substring $s[i\dd n]$ is called a \emph{suffix} of $s$ and a substring $s[1\dd i]$ is called a \emph{prefix}. Denote $[i\dd j] = \{i, i{+}1, \ldots, j\}$. A string $t$ is called a \emph{repeat} in a string $s$ if $t$ occurs in $s$ at least twice.

\begin{maindefinition}
	Given a string $s$ of length $n$, its non-empty substring $s[i\dd j]$ is called a \emph{right} (respectively, \emph{left}) \emph{closed repeat} if it has an occurrence $s[i'\dd j'] = s[i\dd j]$ with $i' > i$ such that  $s[i\dd j]$ does not occur at positions $i{+}1, i{+}2, \ldots, i'{-}1$ and either $s[j{+}1] \ne s[j'{+}1]$ or $j' = n$ (respectively, $s[i{-}1] \ne s[i'{-}1]$ or $i = 1$); the repeat $s[i\dd j]$ is \emph{closed} if it is both left and right closed. We call $s[i'\dd j']$ the \emph{next occurrence of $s[i\dd j]$}.
\end{maindefinition}

To put it differently, $s[i\dd j]$ is a closed repeat if its next occurrence cannot be ``extended'' to the left and to the right matching $s[i{-}1]$ (if $i > 1$) and $s[j{+}1]$ (if $j < n$), respectively. It is important to emphasize that we treat as different those closed repeats that are equal as strings but occur at different positions in $s$. 

\begin{example}
	The string $s = banana$ has the only closed repeat $s[2\dd 4]$, three right closed repeats $s[2\dd 4]$, $s[3\dd 4]$, $s[4\dd 4]$, and three left closed repeats $s[2\dd 2]$, $s[2\dd 3]$, $s[2\dd 4]$. 
\end{example}

We consider the (right/left) closed repeats as position-dependent structures that can ``describe'' all repeats in the following sense: given an arbitrary repeat $t$, each its occurrence $s[i\dd j] = t$, except the rightmost one, is contained in at least one right closed repeat of the form $s[i\dd j{+}r]$, for some $r \ge 0$. To see this, one can take the next occurrence $s[i'\dd j']$ of $s[i\dd j]$ and take as $r$ the length of the longest common prefix of $s[j{+}1\dd n]$ and $s[j'{+}1\dd n]$. Note that the right closed repeat containing $s[i\dd j]$ is not unique: for instance, the repeat $s[1\dd 1]$ in $s = abcababc$  is contained in two right closed repeats $s[1\dd 2] = ab$ and $s[1\dd 3] = abc$. Analogously, one can show that $s[i\dd j]$ is contained in a (not necessarily unique) left closed repeat $s[i{-}\ell\dd j]$ and a (not necessarily unique) closed repeat $s[i{-}\ell\dd j{+}r]$, for some $\ell \ge 0$ and $r \ge 0$.

Let us relate the closed repeats to some other well-known combinatorial objects describing repeats.

\textbf{Maximal repeats.}
A string $t$ is called a \emph{maximal repeat} of a string $s$ if $t$ occurs in $s$ at least twice and, for each letter $a$, the strings $at$ and $ta$ have strictly less occurrences in $s$. Note that the maximal repeats, unlike the closed repeats, are not tied to specific positions in $s$. Clearly, every closed repeat is a maximal repeat (when treated as a separate string, not a substring). Further, every maximal repeat $t$ has at least one occurrence in $s$ that is a right (respectively, left) closed repeat; but $t$ does not necessarily occur as a closed repeat, as the following example shows: the string $bacacab$ has a maximal repeat $a$ but all occurrences of $a$ in $s$ are not closed repeats.

\textbf{Runs.}
A \emph{period} of a string $t$ is an integer $p > 0$ such that $t[i] = t[i{+}p]$, for all $i \in [1\dd |t|{-}p]$. A \emph{run} or \emph{maximal repetition} in a string $s$ is a substring $s[i\dd j]$ whose minimal period $p$ is such that $2p \le j - i + 1$ and neither $s[i{-}1\dd j]$ (if $i > 1$) nor $s[i\dd j{+}1]$ (if $j < n$) have $p$ as their period. The runs can be considered as a structure that ``describes'' all \emph{tandem repeat}, i.e., substrings $s[i\dd i{+}p)$ such that $s[i\dd i{+}p) = s[i{+}p\dd i{+}2p)$: one can maximally ``extend'' $s[i\dd i{+}2p)$ left and right with the period $p$, thus obtaining the unique run ``containing'' $s[i\dd i{+}2p)$. It is known that there are at most $n$ runs in $s$ (see~\cite{BannaiEtAl,KolpakovKucherov}). The basic combinatorics of periods implies that, for a run $s[i\dd j]$ with the minimal period $p$, the substring $s[i\dd j{-}p]$ is a closed repeat and its next occurrence is $s[i{+}p\dd j]$. Due to this relation, the runs form a particular case of closed repeats.

\textbf{Maximal closed substrings.}
A \emph{border} of a string $t$ is a string that is both prefix and suffix of $t$. The border is \emph{proper} if it is not equal to $t$. In~\cite{BDFP} the following definition was proposed: a substring $s[i\dd j]$ is called a \emph{maximal closed substring} of the string $s$ if it has a non-empty proper border that only occurs in $s[i\dd j]$ as a suffix and prefix and neither the substring $s[i{-}1\dd j]$ (if $i > 1$) nor $s[i\dd j{+}1]$ (if $j < n$) satisfy the same property (i.e., they do not have such non-empty borders that only occur as prefix and suffix).\footnote{In~\cite{BDFP} substrings of length one also were called maximal closed substrings; we excluded this special case from the definition so that there is a complete correspondence to the closed repeats.} There is a one-to-one correspondence between the closed repeats and the maximal closed substrings: if $s[i\dd j]$ is a closed repeat and $s[i'\dd j']$ is its next occurrence, then $s[i\dd j']$ is a maximal closed substring and $s[i\dd j]$ is its longest border; conversely, if $s[p\dd q]$ is a maximal closed substring whose longest border is $s[p\dd p{+}b]$, then $s[p\dd p{+}b]$ is a closed repeat and $s[q{-}b\dd q]$ is its next occurrence. According to this correspondence, the closed repeats and the maximal closed substrings are essentially the same objects.

\textbf{\boldmath Maximal $\alpha$-gapped repeats.} For $\alpha > 1$ and a string $s$ of length $n$, its substring $s[i\dd j]$ with a decomposition $s[i\dd j] = uvu$ is called a \emph{maximal $\alpha$-gapped repeat}~\cite{KPPK,KPPK2} if $|uv| \le \alpha|u|$ and $u$ cannot be ``extended'' to the right or left: either $i = 1$ or $s[i{-}1] \ne s[j{-}|u|]$, and either $j = n$ or $s[i{+}|u|] \ne s[j{+}1]$. (Note that the same substring $s[i\dd j]$ with different decompositions $uvu$ may form different maximal $\alpha$-gapped repeats.) In other words, it is a pair of substrings $u$ at a distance at most $\alpha|u|$ that cannot be ``extended'' to the right or left. It was proved in~\cite{CrochemoreKolpakovKucherov,GIIKM} that there are at most $O(\alpha n)$ maximal $\alpha$-gapped repeats in $s$. Evidently, if $s[p\dd q]$ is a closed repeat and $s[p'\dd q']$ is its next occurrence, then $s[p\dd q']$ is a maximal $\alpha$-gapped repeat for $\alpha \ge (p' - p) / (q - p + 1)$. The converse relation, however, is not straightforward: $u$ may have occurrences in a substring $uvu = s[i\dd j]$ other than the prefix and suffix so that no closed repeat at position $i$ has next occurrence at position $j - |u| + 1$ (for example, consider the maximal $4$-gapped repeat $abaa$). The concepts of $\alpha$-gapped repeats and closed repeats seem to ``grasp'' different aspects: the $\alpha$-gapped repeats provide a detailed view on repeats with relatively close occurrences whereas the closed repeats may encode information about arbitrarily distant repeats but in a less detailed manner.

\section{Combinatorics of Closed Repeats}
\label{sec:combinatorics}

Perhaps, the most important fact about the (right/left) closed repeats is that there are at most $O(n\log n)$ of them, which makes them a relatively compact structure that stores the information about all repeats, in a sense. This was essentially proved in~\cite{BDFP2} for the closed repeats using an algorithm computing all maximal closed substrings (the bound followed since the algorithm worked in $O(n\log n)$ time). We provide a direct combinatorial proof adapted to the right/left closed repeats.

\begin{theorem}
	Any string of length $n$ contains at most $O(n\log n)$ left closed repeats and right closed repeats.\label{thm:upper}
\end{theorem}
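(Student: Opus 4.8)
The plan is to prove the bound for right closed repeats; the argument for left closed repeats is symmetric. I would associate each right closed repeat $s[i\dd j]$ with its starting position $i$ and argue that not too many right closed repeats can begin at any fixed position. Recall from the preceding discussion that if $s[i\dd j]$ is a right closed repeat with next occurrence $s[i'\dd j']$, then $s[i\dd j]$ is a maximal repeat (as a string), so different right closed repeats at the same position $i$ correspond to distinct maximal-repeat suffixes begun at $i$, distinguished by the length $j - i + 1$. The key point I would aim for is that the relevant lengths grow geometrically, which caps the count at $O(\log n)$ per position and yields the $O(n\log n)$ total.

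\medskip

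The heart of the argument is to show that if $s[i\dd j_1]$ and $s[i\dd j_2]$ are two right closed repeats starting at the same position $i$ with $j_1 < j_2$, then $j_2 - i + 1 \ge \tfrac{3}{2}(j_1 - i + 1)$, or some similar constant-factor gap. First I would consider the next occurrence $s[i_1'\dd j_1']$ of the shorter repeat $s[i\dd j_1]$. Because $s[i\dd j_1]$ is right closed, its occurrence at $i$ cannot be extended to the right to match the next occurrence, so $s[j_1{+}1] \ne s[j_1'{+}1]$ (or $j_1' = n$). I would then compare this with the longer right closed repeat $s[i\dd j_2]$: since $s[i\dd j_2]$ is a longer string sharing the prefix $s[i\dd j_1]$, its next occurrence $s[i_2'\dd j_2']$ must satisfy $i_2' > i_1'$ or the occurrence structures must separate in a controlled way. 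The plan is to extract a periodicity or overlap relation between the two next occurrences that forces the longer repeat to be substantially longer than the shorter one.

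\medskip

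The cleanest route I anticipate is to charge each right closed repeat $s[i\dd j]$ to the pair $(i, \lceil \log(j - i + 1)\rceil)$ and prove that this map is injective, i.e., at most one (or $O(1)$) right closed repeats start at a given position $i$ within each dyadic length range $[2^k, 2^{k+1})$. To establish this I would suppose two right closed repeats $s[i\dd j_1]$, $s[i\dd j_2]$ have lengths in the same dyadic band and derive a contradiction: their next occurrences $s[i_1'\dd j_1']$ and $s[i_2'\dd j_2']$ would be forced to overlap heavily with $s[i\dd j_2]$, so that applying the Fine--Wilf theorem (or a direct period argument) the shorter repeat would in fact be extendable to the right at position $i$, contradicting its being right closed. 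Summing over all $i \in [1\dd n]$ and all $O(\log n)$ dyadic bands then gives the $O(n\log n)$ bound.

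\medskip

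The main obstacle I expect is the combinatorial core of the second paragraph: carefully controlling how the next occurrences of two nested right closed repeats relate to each other. The difficulty is that the next occurrence of the longer repeat need not be an extension of the next occurrence of the shorter one — the shorter string can recur earlier — so I must track occurrences of $s[i\dd j_1]$ inside the gap up to $s[i_2'\dd j_2']$ and rule out, via the right-closedness condition $s[j_1{+}1]\ne s[j_1'{+}1]$, the configurations that would let two comparable-length repeats coexist. Getting the precise constant in the geometric gap, and handling the boundary cases $j' = n$ cleanly, is where the real work lies; everything else is a summation.
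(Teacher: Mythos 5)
Your overall accounting ($n$ starting positions times $O(\log n)$ dyadic length bands times $O(1)$ repeats per band) rests on a claim that is false: it is not true that only $O(1)$ right closed repeats starting at a fixed position $i$ can have lengths in the same band $[2^k, 2^{k+1})$, nor do the lengths of the right closed repeats at a fixed position grow geometrically. The paper's own lower-bound construction (Theorem~\ref{thm:lower}) already refutes this: in a uniformly random binary string, for each position $i \in [2\dd n/2]$ and each length $b \in [1\dd \frac{1}{3}\log n]$, the substring $s[i\dd i{+}b)$ is a closed repeat (hence a right closed repeat) with probability greater than $\frac{1}{8}$, so in expectation each such position carries $\Omega(\log n)$ right closed repeats all of length at most $\frac{1}{3}\log n$; these lengths occupy only $O(\log\log n)$ dyadic bands, so for some string some position and some band must hold $\Omega(\log n/\log\log n)$ of them. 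Even the tiny example $s = abcababc$ from Section~\ref{sec:closed-repeats} has two right closed repeats $ab$ and $abc$ at position $1$, of lengths $2$ and $3$. What Lemmas~\ref{lem:occ-increase} and~\ref{lem:next-occ} actually give is growth of the next-occurrence positions $q_x$ (namely $q_{x+2} > q_x + t_x$), not of the lengths $t_x$, and the Fine--Wilf step you anticipate does not produce a contradiction for two nested right closed repeats of comparable length, because their next occurrences can be far apart and need not overlap at all.

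The paper's proof does not localize the count at starting positions. It takes the suffix tree $T$ of $s\$$ and calls an edge $xy$ light if $\size(x) \ge 2\cdot\size(y)$; every root--leaf path has at most $\log n$ light edges, so there are at most $n\log n$ pairs $(z,xy)$ with $z$ a leaf and $xy$ a light edge on the root--$z$ path. A right closed repeat $s[i\dd j]$ with next occurrence $s[i'\dd j']$ determines a branching node $x$ spelling $s[i\dd j]$ with two distinct outgoing edges toward the leaves of the suffixes $s[i\dd n]$ and $s[i'\dd n]$; at least one of those two edges is light, and the repeat is charged to the pair formed by that light edge and the corresponding leaf. Each pair is charged at most twice (once as the earlier occurrence, once as the next occurrence of some repeat). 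The $\log n$ factor therefore comes from the number of light edges on a root--leaf path, a global suffix-tree quantity, not from any per-position geometric structure; the charging scheme you propose cannot be repaired without replacing this core idea.
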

\begin{proof}
	Consider a string $s$ of length $n$. Denote by $T$ the suffix tree of the string $s\$$, where $\$$ is a special letter that does not occur in $s$. We call an edge $xy$ of $T$, which connects two nodes $x$ and $y$, light if $\size(x) \ge 2\cdot\size(y)$, where $\size(x)$ denotes the number of leaves in the subtree rooted at $x$. Evidently, for any leaf $z$, the root--$z$ path (i.e., the path from the root to $z$) has at most $\log n$ light edges. Hence, there are at most $n\log n$ pairs $(z,xy)$ such that $z$ is a leaf and $xy$ is a light edge on the root--$z$ path.

	Consider a right closed repeat $s[i\dd j]$ and its next occurrence $s[i'\dd j']$. It is straightforward that the tree $T$ has a node $x$ such that the string $s[i\dd j]$ is written on the root--$x$ path and $x$ has children $y_1$ and $y_2$ such that $s[j{+}1]$ is the first letter written on the edge $xy_1$ and $s[j'{+}1]$ is the first letter on $xy_2$ (assuming $s[j'{+}1] = \$$ if $j' = n$). Observe that at least one of the edges $x y_1$ and $x y_2$ must be light. Denote by $z_1$ and $z_2$ the leaves of $T$ corresponding to the suffixes $s[i\dd n]$ and $s[i'\dd n]$, respectively. We associate the repeat $s[i\dd j]$ with the pair $(z_1,xy_1)$ if the edge $x y_1$ is light, and with $(z_2,xy_2)$ otherwise (in this case $x y_2$ is light).
	
	Given a pair $(z,xy)$ such that $z$ is a leaf and $xy$ is a light edge on the root--$z$ path, we claim that the pair could be associated with at most two right closed repeats. Denote by $t$ the string written on the root--$x$ path. Let $s[k\dd n]$ be the suffix corresponding to $z$. The pair $(z,xy)$ could be associated either with the right closed repeat $s[k\dd k{+}|t|)$ (if any) or with a closed repeat $s[i\dd j] = t$ whose next occurrence is $s[k\dd k{+}|t|)$. There is at most one such $s[i\dd j]$ since $s[i\dd j]$ must be the closest to $s[k\dd k{+}|t|)$ occurrence of $t$ that precedes $s[k\dd k{+}|t|)$. Thus, since there are at most $n\log n$ such pairs $(z,xy)$, the number of right closed repeats is at most $2n\log n$.
	
	For left closed repeats, one can observe that whenever $s[i\dd j]$ is a left closed repeat and $s[i'\dd j']$ is its next occurrence, the substring of the reversed string $\cev{s} = s[n]\cdots s[2]s[1]$ that corresponds to $s[i'\dd j']$ is a right closed repeat in $\cev{s}$ and its next occurrence in $\cev{s}$ is the substring corresponding to $s[i\dd j]$. Hence, the bound on the number of left closed repeats in $s$ follows from the bound on the right closed repeats in $\cev{s}$.
\end{proof}

In particular, Theorem~\ref{thm:upper} implies that the number of closed repeats is $O(n\log n)$, which was already proved in~\cite{BDFP2} (in terms of maximal closed substrings). This bound is tight due to the following result.

\begin{theorem}
	The uniformly random binary string of length $n$ has $\Omega(n\log n)$ closed repeats in expectation.\label{thm:lower}
\end{theorem}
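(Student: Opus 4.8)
The plan is to show that for a suitable range of lengths $\ell$ and starting positions $i$, the substring $s[i\dd i{+}\ell{-}1]$ is a closed repeat with at least constant probability, and then to sum these probabilities by linearity of expectation. Since the admissible range will contain $\Theta(\log n)$ lengths and $\Theta(n)$ starting positions, and since each admissible pair $(i,\ell)$ names a distinct candidate substring (so no double counting occurs), this yields the desired $\Omega(n\log n)$ bound on the expected number of closed repeats. Throughout, $s$ is a string of $n$ independent uniform bits.

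The central step is a decoupling argument. Fix $i$ and $\ell$, write $w=s[i\dd i{+}\ell{-}1]$, and let $D$ be the distance from $i$ to the next occurrence of $w$ (with $D=\infty$ if none exists); note that being a closed repeat forces $D<\infty$. For each finite value $d$, the event $\{D=d\}$ is determined solely by the letters of the window $s[i\dd i{+}d{+}\ell{-}1]$, because it only asserts that $w$ reoccurs at $i{+}d$ and nowhere strictly between. The four letters governing closure are $s[i{-}1],s[i{+}d{-}1]$ on the left and $s[i{+}\ell],s[i{+}d{+}\ell]$ on the right; crucially, $s[i{-}1]$ and $s[i{+}d{+}\ell]$ lie strictly outside this window and are therefore fresh fair bits, independent of $\{D=d\}$ and of the inside letters $s[i{+}d{-}1],s[i{+}\ell]$. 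I would then argue that, conditioned on $\{D=d\}$ and on the full window content, the left mismatch $s[i{-}1]\ne s[i{+}d{-}1]$ has probability $\tfrac12$ (whenever $i\ge 2$) and the right mismatch $s[i{+}\ell]\ne s[i{+}d{+}\ell]$ has probability $\tfrac12$ (or holds automatically if the next occurrence ends at position $n$), with the two events independent since they use distinct fresh bits. Hence $\Pr[s[i\dd i{+}\ell{-}1]\text{ is closed}\mid D=d]\ge\tfrac14$, and summing over $d$ gives $\Pr[s[i\dd i{+}\ell{-}1]\text{ is closed}]\ge\tfrac14\,\Pr[D<\infty]$.

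It then remains to bound $\Pr[D<\infty]\ge c_0$ for a positive constant $c_0$, uniformly over the admissible range. For this I would discard overlaps entirely and inspect only the disjoint windows $s[i{+}t\ell\dd i{+}(t{+}1)\ell{-}1]$ for $t=1,\dots,m$, where $m=\Theta((n{-}i)/\ell)$. Conditioned on $w$, these windows are i.i.d.\ uniform and each equals $w$ with probability $2^{-\ell}$, so $\Pr[D<\infty]\ge 1-(1-2^{-\ell})^{m}\ge 1-e^{-m2^{-\ell}}$. Restricting to $i\le n/2$ and $\ell\le \log n-\log\log n-O(1)$ makes $m2^{-\ell}$ bounded below by a positive constant, so $\Pr[D<\infty]\ge c_0>0$. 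Combining with the previous paragraph and summing over the $\Theta(\log n)$ admissible lengths and $\Theta(n)$ admissible positions yields expected number of closed repeats at least $\tfrac{c_0}{4}\cdot\Theta(n)\cdot\Theta(\log n)=\Omega(n\log n)$.

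The main obstacle is exactly this decoupling: a priori the extension letters $s[i{+}\ell]$ and $s[i{+}d{-}1]$ that decide right/left closure are entangled with the very bits that determine where the next occurrence lands, so one cannot naively multiply a ``$\tfrac14$ chance of mismatch'' by a ``probability of having a next occurrence.'' The resolution is the observation that each closure condition pairs one inside letter with one fresh outside letter, so a single fresh fair coin per side forces the mismatch with probability $\tfrac12$ regardless of the window content, keeping the two sides independent. Once this is in place, the remaining estimate $\Pr[D<\infty]\ge c_0$ is routine via the disjoint-window sampling above.
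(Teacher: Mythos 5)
Your proposal is correct and follows essentially the same route as the paper's proof: linearity of expectation over $\Theta(n)\cdot\Theta(\log n)$ position--length pairs, a disjoint-window argument to lower-bound the probability that the substring reoccurs, and conditioning on the location of the first reoccurrence so that the two ``outside'' letters $s[i{-}1]$ and $s[i{+}d{+}\ell]$ are fresh fair bits giving the factor $\tfrac14$. The only differences are parameter choices (the paper takes lengths up to $\tfrac13\log n$ so that the reoccurrence probability is $1-e^{-\Theta(\sqrt n)}$ rather than merely a constant), which do not affect the structure of the argument.
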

\begin{proof}
Consider a uniformly random binary string $s$ of length $n$. For a position $i \in [2\dd n/2]$ and an integer $b \in [1\dd \frac{1}{3} \log n]$, denote by $I(i,b)$ an indicator random variable that is equal to $1$ iff $s[i\dd i{+}b)$ is a closed repeat in $s$. Then, the total number of closed repeats is at least $I = \sum_{i \in [2\dd n/2]} \sum_{b \in [1\dd \frac{1}{3} \log n]} I(i,b)$. We are to show that $\E[I(i,b)] > \frac{1}{8}$ when $n$ is large enough, which implies that $\E[I] \ge \Omega(n \log n)$ and, thus, the random string has $\Omega(n \log n)$ closed repeats in expectation.

Fix $i \in [2\dd n/2]$ and $b \in [1\dd \frac{1}{3} \log n]$. It is well known that $1 - 1/k \le e^{-1/k}$, for $k \ge 1$. Since the number of binary strings with length $b$ is $2^{b} \le \sqrt[3]{n}$, the probability that the string $s[i\dd i{+}b)$ has no occurrences at positions $i{+}b, i{+}2b, i{+}3b, \ldots$ (recall that $i \le n/2$) is at most $(1 - 1/2^b)^{\Theta(n/b)} \le (1 - 1/\sqrt[3]{n})^{\Theta(n/b)} \le (e^{-1/\sqrt[3]{n}})^{\Theta(n/b)} = e^{-\Theta(n^{2/3}/b)}$, which is upperbounded by $1 / e^{\Theta(\sqrt{n})}$ since $b < \log n$. Hence, with probability at least $1 - 1/e^{\Theta(\sqrt{n})}$, the string $s[i\dd i{+}b)$ has an occurrence at one of the positions $i{+}1, i{+}2,\ldots$ (the probability is taken for all random strings $s$ with the fixed length $n$ and the fixed $i$ and $b$). 

For integer $j > i$, denote by $E_j$ the following event: $s[i\dd i{+}b)$ occurs at position $j$ but does not occur at positions $i{+}1, i{+}2, \ldots, j{-}1$. Evidently, the events $E_j$ are mutually exclusive and $\sum_{j > i} \Pr(E_j)$ equals the probability that $s[i\dd i{+}b)$ occurs at one of the positions $i{+}1, i{+}2, \ldots$, which, as was shown above, is at least $1 - 1/e^{\Theta(\sqrt{n})}$. Denote by $C$ the event that $s[i\dd i{+}b)$ is a closed repeat. Conditioned on the event $E_j$, the event $C$ holds iff $s[i{-}1] \ne s[j{-}1]$ (recall that $i > 1$) and either $s[i{+}b] \ne s[j{+}b]$ or $j{+}b > n$. Since, provided $E_j$ happens, the letter $s[i{-}1]$ is chosen independently of other letters, we have $s[i{-}1] \ne s[j{-}1]$ with probability $\frac{1}2$; analogously, provided $E_j$ happens and $j{+}b \le n$, $s[j{+}b]$ is independent of other letters and, thus, $s[i{+}b] \ne s[j{+}b]$ with probability $\frac{1}{2}$. Therefore, $\Pr(C | E_j) = \frac{1}{2}\cdot\frac{1}{2} = \frac{1}{4}$, for $i < j \le n{-}b$, and $\Pr(C | E_j) = \frac{1}{2}$, for $j = n - b + 1$. Applying the formula of total probability, we lowerbound the probability that $s[i\dd i{+}b)$ is a closed repeat as follows:
\[
\Pr(C) = \sum_{j > i} \Pr(C | E_j)\cdot \Pr(E_j) \ge \frac{1}{4} \sum_{j > i} \Pr(E_j) \ge \frac{1}4 \left(1 - \frac{1}{e^{\Theta(\sqrt{n})}}\right).
\]
We obtain that $I(i,b) = 1$ with probability at least $\frac{1}{4} (1 - 1/e^{\Theta(\sqrt{n})})$, which is greater than $\frac{1}{8}$ when $n$ is sufficiently large. Thus, we obtain $\E[I(i,b)] > \frac{1}{8}$ as required.
\end{proof}

The definition of (right/left) closed repeats immediately implies the following lemma.

\begin{lemma}
	For a string $s$, let $s[i\dd i{+}t_1), \ldots, s[i\dd i{+}t_k)$ be all (right/left) closed repeats at position $i$ and let $s[q_1\dd q_1{+}t_1),\ldots,s[q_k\dd q_k{+}t_k)$ be their corresponding next occurrences. Suppose that $t_1 \le \cdots \le t_k$. Then, $q_1 < \cdots < q_k$ and $t_1 < \cdots < t_k$.\label{lem:occ-increase}
\end{lemma}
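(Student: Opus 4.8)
The plan is to prove both claims ($q_1 < \cdots < q_k$ and $t_1 < \cdots < t_k$) by analyzing what it means for two right closed repeats to share the same starting position $i$. The key structural observation is that all these repeats are prefixes of the suffix $s[i\dd n]$, so for lengths $t_a \le t_b$ the shorter string $s[i\dd i{+}t_a)$ is a prefix of the longer string $s[i\dd i{+}t_b)$. First I would establish the strictness $t_1 < \cdots < t_k$: if two distinct right closed repeats at position $i$ had the same length, they would be the identical substring $s[i\dd i{+}t)$, and the Main definition pins down the next occurrence as the \emph{closest} occurrence with larger starting index satisfying the extension-failure condition, so the repeat-plus-next-occurrence pair is uniquely determined by $i$ and $t$. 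Hence no two of the $k$ repeats can share a length, giving $t_1 < \cdots < t_k$ once we know they are genuinely distinct.

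The main work is showing the next-occurrence positions increase, i.e.\ $q_a < q_b$ whenever $t_a < t_b$. Fix two indices $a < b$ so that $t_a < t_b$, write $u = s[i\dd i{+}t_a)$ and $w = s[i\dd i{+}t_b)$, and recall $u$ is a proper prefix of $w$. The plan is to argue by contradiction: suppose $q_b \le q_a$. Since $s[q_b\dd q_b{+}t_b) = w$ is an occurrence of $w$ and $u$ is a prefix of $w$, the string $u$ occurs at position $q_b$ as well. By the definition of the next occurrence of $u$, the position $q_a$ is the closest position $>i$ at which $u$ occurs and whose rightward extension by $s[i{+}t_a]$ fails. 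So if $q_b < q_a$, then $u$ occurs at $q_b$ strictly between $i$ and its designated next occurrence $q_a$, which would only be permitted if that occurrence of $u$ at $q_b$ \emph{can} be extended to the right matching $s[i{+}t_a]$. But the occurrence of $u$ at $q_b$ sits inside the occurrence of $w$ at $q_b$, and the letter following $u$ inside $w$ is exactly $s[i{+}t_a]$ (since $u$ is a prefix of $w$), so this extension does match — which is consistent and does not immediately contradict anything. This is the subtle point and I expect it to be the main obstacle.

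To resolve it I would look more carefully at why $q_a$ is the next occurrence of $u$: it is the closest occurrence of $u$ after $i$ at which the extension by the single letter $s[i{+}t_a]$ \emph{fails}, and crucially $u$ must not occur at any position in $(i,q_a)$ at all — re-reading the Main definition, $s[i\dd j]$ is required not to occur at positions $i{+}1,\ldots,i'{-}1$, so $q_a$ is literally the first occurrence of $u$ after position $i$. Therefore any occurrence of $u$ at a position in $(i,q_a)$ is outright forbidden, and the occurrence of $u$ at $q_b$ forces $q_b \ge q_a$. Combined with the uniqueness argument ruling out $q_b = q_a$ (which would force $t_b = t_a$ via the same determinism used above, contradicting $t_a < t_b$), this yields $q_a < q_b$, and chaining over consecutive indices gives $q_1 < \cdots < q_k$.

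Finally I would note that the left closed repeats case follows by the same reflection trick used in Theorem~\ref{thm:upper}: a left closed repeat at position $i$ in $s$ corresponds to a right closed repeat in the reversed string $\cev{s}$, and the monotonicity statement is symmetric under reversal, so the left-closed case reduces verbatim to the right-closed one. The whole argument is elementary once the correct reading of ``next occurrence'' as the \emph{first} occurrence after $i$ is nailed down; that reading is what converts the potentially delicate extension-matching issue into a clean first-occurrence comparison.
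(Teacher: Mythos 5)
The paper offers no proof of this lemma (it is stated as an immediate consequence of the definition), so there is no official argument to compare against; your treatment of the right-closed case is the natural one and is almost complete. The one step that does not hold up as written is the exclusion of $q_a = q_b$. You appeal to ``the same determinism used above,'' but that determinism only says that the pair $(i,t)$ determines the next occurrence $q$; it does not say that $(i,q)$ determines $t$, which is what ruling out $q_a = q_b$ with $t_a < t_b$ would require. The correct argument is one you already have in hand but deploy in the wrong branch: if $q_a = q_b = q$, then the occurrence of $w = s[i\dd i{+}t_b)$ at $q$ forces $s[q{+}t_a] = s[i{+}t_a]$, and since $q + t_a - 1 < q + t_b - 1 \le n$ the end-of-string escape clause is unavailable, so the next occurrence of $u = s[i\dd i{+}t_a)$ extends to the right matching $s[i{+}t_a]$, contradicting the right-closedness of $u$. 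With that substitution, your proof of the right-closed case (and hence of the case of closed repeats, which are in particular right closed) is correct.

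The more serious gap is the left-closed case. Reversal does not reduce it ``verbatim'': under reversal a left closed repeat $s[i\dd j]$ with next occurrence $s[i'\dd j']$ becomes a right closed repeat of $\cev{s}$ that \emph{starts} at the position corresponding to $j'$ and whose next occurrence \emph{ends} at the position corresponding to $i$, so the family of left closed repeats starting at a fixed $i$ maps to a family of right closed repeats whose \emph{next occurrences share a common right endpoint}, not to a family sharing a common starting position; the lemma's hypothesis is not preserved. This is not merely a technicality: the mismatch that gives strictness in the right-closed case lives at position $i+t_a$, exactly where the two repeats differ in length, whereas for left closed repeats the failure condition sits at the fixed position $i-1$ and is identical for all lengths. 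Your first-occurrence argument still yields $q_a \le q_b$, but strictness can genuinely fail for the literal statement: in $s = cabdabe$ both $s[2\dd 2] = a$ and $s[2\dd 3] = ab$ are left closed repeats with next occurrence starting at position $5$, since $a$ and $ab$ first reoccur at the same position and $s[1] = c \ne d = s[4]$. So the left-closed half of the claim needs either a separate argument under whatever reading of ``left closed repeats at position $i$'' is intended (e.g., repeats sharing a common \emph{right} endpoint, which reversal does map to the right-closed setting), or an explicit restriction to repeats that are also right closed.
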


The following lemma somewhat details the character of growth for the sequence $q_1, \ldots,q_k$ in Lemma~\ref{lem:occ-increase}. 

\begin{lemma}
	For a string $s$, let $s[i\dd i{+}t_1),\ldots,s[i\dd i{+}t_k)$ be all (right) closed repeats at position $i$ and let  $s[q_1\dd q_1{+}t_1),\ldots,s[q_k\dd q_k{+}t_k)$ be their corresponding next occurrences. Suppose that $t_1 < \cdots < t_k$. Then, for any $x \in [1\dd k{-}2]$, we have $q_{x+2} > q_x + t_x$.\label{lem:next-occ}
\end{lemma}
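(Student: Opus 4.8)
The plan is to argue by contradiction, extracting a forbidden periodic coincidence that clashes with the right-closedness of the shortest of the three repeats. So suppose $q_{x+2} \le q_x + t_x$ and set $u = s[i\dd i{+}t_x)$, $v = s[i\dd i{+}t_{x+1})$, $w = s[i\dd i{+}t_{x+2})$; note that $u$ is a prefix of $v$ and $v$ is a prefix of $w$. By Lemma~\ref{lem:occ-increase} we have $q_x < q_{x+1} < q_{x+2}$, so, writing $d = q_{x+1} - q_x$ and $p = q_{x+2} - q_x$, the assumption gives $0 < d < p \le t_x$; in particular $d$, $p-d$, and $p$ are all at most $t_x = |u| < |v|$.

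First I would read off two periodicities from the overlapping next occurrences. The prefix $u$ occurs at $q_x$ (its own next occurrence) and at $q_{x+1} = q_x + d$ (as a prefix of $v$), and these two occurrences sit at distance $d < |u|$; by the standard fact that two occurrences of a string at a distance below its length force that distance to be a period, $u$ has period $d$. Likewise $v$ occurs at $q_{x+1} = q_x + d$ and at $q_{x+2} = q_x + p$ (as a prefix of $w$), at distance $p - d < |v|$, so $v$ has period $p - d$. Second, I would pin down the letter $s[q_x{+}t_x]$ in two ways. Right-closedness of $u$ at $q_x$ gives $s[i{+}t_x] \ne s[q_x{+}t_x]$; here a one-line check rules out the boundary alternative $q_x + t_x - 1 = n$, since $q_{x+2} > q_x$ and $t_{x+2} > t_x$ would then make the occurrence of $w$ at $q_{x+2} \le q_x + t_x$ run past position $n$. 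Reading the same text position inside the occurrence of $w$ at $q_{x+2} = q_x + p$ gives $s[q_x{+}t_x] = s[i{+}t_x{-}p]$, the offset $t_x - p$ being nonnegative and below $t_{x+2}$.

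The crux — and the step I expect to be the main obstacle — is to let the two periods conspire against $s[i{+}t_x] \ne s[q_x{+}t_x]$. Applying the period $p - d$ of $v$ at index $t_x$ gives $s[i{+}t_x] = s[i{+}t_x{-}(p{-}d)] = s[i{+}t_x{-}p{+}d]$; since $t_x - p + d$ still lies in $[0, t_x)$, applying the period $d$ of $u$ there gives $s[i{+}t_x{-}p{+}d] = s[i{+}t_x{-}p]$. Chaining these with the second reading yields $s[i{+}t_x] = s[i{+}t_x{-}p] = s[q_x{+}t_x]$, contradicting right-closedness. The delicate point is purely bookkeeping: one must verify that every index produced by the two period relations stays within the legal ranges $[0,t_x)$ for $u$ and $[0,t_{x+1})$ for $v$, and all of these checks reduce to the inequalities $0 < d < p \le t_x$ already in hand. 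This contradiction establishes $q_{x+2} > q_x + t_x$.
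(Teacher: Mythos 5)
Your proof is correct --- every index check goes through given $0 < d < p \le t_x$, and the overlap bounds $d < |u|$ and $p - d < |v|$ that justify the two periods are exactly the inequalities you record --- but it is organized differently from the paper's. The paper first disposes of the case $q_{x+1} \ge q_x + t_x$ (then $q_{x+2} > q_{x+1}$ finishes immediately) and otherwise uses only the single period $p = q_{x+1} - q_x$ of the string $s[q_x\dd q_x{+}t_x)$: right-closedness of the shortest repeat, combined with the occurrence of the longer prefix at $q_{x+1}$, forces $s[q_x{+}t_x] \ne s[q_{x+1}{+}t_x]$, and hence no occurrence of $s[q_x\dd q_x{+}t_x)$ can start in $(q_{x+1}\dd q_x{+}t_x]$, since such an occurrence would carry the period $p$ while containing both of these mismatching positions; as $q_{x+2} > q_{x+1}$, this yields $q_{x+2} > q_x + t_x$. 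Your version instead assumes $q_{x+2} \le q_x + t_x$ outright and chains two periods ($d$ on $u$ and $p - d$ on $v$) to transport the letter $s[i{+}t_x]$ down to $s[i{+}t_x{-}p] = s[q_x{+}t_x]$, contradicting the same right-closedness mismatch. The ingredients are identical (overlap-induced periods against the forced mismatch), so neither argument is more general; the paper's is slightly leaner in that it needs only one period and one mismatch pair, while yours avoids the initial case split and, to its credit, explicitly rules out the boundary alternative $j' = n$ in the definition of right-closedness --- a detail the paper leaves implicit.
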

\begin{proof}
	By Lemma~\ref{lem:occ-increase}, we have $q_1 < \cdots < q_k$. Assume that $q_{x+1} < q_x + t_x$ (otherwise there is nothing to prove). Since the string $s[q_x\dd q_x{+}t_x)$ occurs at position $q_{x+1}$, $p = q_{x+1} - q_x$ is a period of $s[q_x\dd q_x{+}t_x)$. Further, since $t_{x+1} > t_x$, the definition of right closed repeats implies that the letters $s[q_x{+}t_x]$ and $s[q_{x+1}{+}t_x] = s[q_x{+}p{+}t_x]$ are distinct. Thus, any substring with period $p$ cannot contain both  $s[q_x{+}t_x]$ and $s[q_x{+}p{+}t_x]$. In particular, $s[q_x\dd q_x{+}t_x)$ cannot occur at positions from $(q_x{+}p \dd q_x{+}t_x] = (q_{x+1}\dd q_x{+}t_x]$. Hence, $q_{x+2} > q_x + t_x$. 
\end{proof}

\section{Computation}
\label{sec:computation}

In~\cite{BDFP2} an algorithm was proposed that computes all maximal closed substrings in  a string of length $n$ in $O(n\log n)$ time using the suffix tree. On closer inspection, it is straightforward that this same algorithm also computes all closed repeats, due to the correspondence outlined in Section~\ref{sec:closed-repeats}, and all right/left closed repeats (in fact, it computes precisely the right closed repeats but filters out those of them that are not left closed). For the sake of brevity, we do not describe this simple modification here and refer the reader to~\cite{BDFP2}.

While the time $O(n\log n)$ is optimal in the worst case when there are $\Theta(n\log n)$ closed repeats in the string, it is natural to ask whether a better output-dependent algorithm with $O(n + c)$ time is possible, where $c$ is the number of (right/left) closed repeats. We investigate this problem under the assumption of \emph{general ordered alphabet} where the alphabet is totally ordered and the only operation on two letters is the comparison that evaluates their relative order in $O(1)$ time.

As it will be seen in Section~\ref{sec:applications}, the right closed repeats can be used to compute the so-called LZ77 parsing, which requires $\Omega(n\log\sigma)$ time on the general ordered alphabet of size $\sigma$, as was proved in~\cite{EllertPhD,KosolobovLZRuns}. This observation suggests that any algorithm computing the closed repeats must contain a kind of indexing data structure, like the suffix tree, that takes $\Omega(n\log\sigma)$ time to construct on the general alphabet. Unfortunately, the ``hard'' strings for the LZ77 parsing that were used in~\cite{EllertPhD,KosolobovLZRuns} contain $\Omega(n\log n)$ closed repeats in the worst case. Hence, we prove the lower bound $\Omega(n\log\sigma)$ using a different reduction.


\begin{theorem}
	Any algorithm computing all closed repeats in strings of length $n$ over a general ordered alphabet of size $\Theta(\sigma)$, for $\sigma \ge n^{\Omega(1)}$, spends $\Omega(n\log n)$ time on at least one string containing $O(n)$ closed repeats.\label{thm:lower-bound-alphabet}%
\end{theorem}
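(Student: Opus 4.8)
The plan is to prove a lower bound of $\Omega(n\log n)$ on any algorithm that computes all closed repeats over a general ordered alphabet, in the regime where the alphabet is large ($\sigma \ge n^{\Omega(1)}$), by exhibiting a family of inputs with only $O(n)$ closed repeats on which the algorithm is forced to perform $\Omega(n\log n)$ comparisons. The natural strategy is a reduction from a comparison-based problem with a known $\Omega(n\log n)$ information-theoretic lower bound, most plausibly \emph{sorting} (or equivalently \emph{element distinctness}) of $n$ elements drawn from the ordered alphabet. The key point is that in the general ordered alphabet model the only thing the algorithm learns about the input is the outcome of letter comparisons, so if we can encode an arbitrary permutation (or an arbitrary multiset of letters) into a string whose closed-repeat structure reveals the sorted order, then computing the closed repeats must implicitly sort, costing $\Omega(n\log n)$.

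First I would design an encoding that turns a sequence of distinct letters $a_1, a_2, \ldots, a_m$ (with $m = \Theta(n)$) into a string $s$ of length $O(n)$ in which the presence or position of each closed repeat pins down adjacencies in the sorted order of the $a_j$'s. A convenient device is to interleave the unknown letters with blocks over a fixed small ``frame'' alphabet so that repeats are forced to occur only between a letter and its sorted neighbor: for instance, construct $s$ so that two letters $a_j$ and $a_{j'}$ generate a closed repeat spanning the gap between them precisely when they are equal, or so that the next-occurrence pointer of a short repeat anchored at $a_j$ lands exactly at the next-larger letter. The frame symbols guarantee that the repeats introduced are few and controllable, keeping the total count $O(n)$, while the identity of each closed repeat's next occurrence encodes one bit of comparison information. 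I would then argue that, given the list of all closed repeats together with their next-occurrence positions, the sorted order of $a_1, \ldots, a_m$ can be reconstructed in $O(n)$ additional comparisons, so that sorting reduces to computing closed repeats plus linear overhead.

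The main obstacle, and the part I expect to require the most care, is controlling the number of closed repeats in the constructed string so that it stays $O(n)$ rather than blowing up to $\Theta(n\log n)$ — precisely the failure mode noted in the excerpt for the LZ77 ``hard'' strings. The danger is that a naive encoding creates many repeated frame substrings, each spawning numerous closed repeats at unrelated positions. To keep the count linear I would use a frame alphabet (or a sequence of distinct separator letters) chosen so that every frame substring occurs $O(1)$ times, which forces the only ``interesting'' repeats to be the $O(n)$ intended ones carrying the comparison information; bounding the incidental repeats that straddle frame boundaries is the delicate combinatorial step. The standard information-theoretic argument then finishes the proof: an adversary answering comparisons consistently can keep $\log(m!) = \Theta(n\log n)$ candidate sorted orders alive until the algorithm has made $\Omega(n\log n)$ comparisons, and since each comparison takes $\Omega(1)$ time and the reconstruction overhead is only $O(n)$, the algorithm must spend $\Omega(n\log n)$ time on some input with $O(n)$ closed repeats. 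The hypothesis $\sigma \ge n^{\Omega(1)}$ is what makes room for $\Theta(n)$ distinct letters to be sorted and is used exactly to guarantee $\log(m!) = \Omega(n\log n) = \Omega(n\log \sigma)$.
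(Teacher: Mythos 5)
There is a genuine gap at the heart of your plan: the set of closed repeats of a string (together with all next-occurrence positions) depends only on the \emph{equality} pattern of its letters, not on their relative order. The definition involves only conditions of the form $s[i'\dd j'] = s[i\dd j]$ and $s[j{+}1] \ne s[j'{+}1]$, so if $\phi$ is any injective relabeling of the alphabet, then $s$ and $\phi(s)$ have exactly the same closed repeats. Consequently no encoding can make ``the next-occurrence pointer of a short repeat anchored at $a_j$ land exactly at the next-larger letter,'' and the output of a closed-repeats algorithm can never reveal the sorted order of the $a_j$'s. Your reduction from sorting therefore cannot be carried out, and the claim that the sorted order is reconstructible from the list of closed repeats in $O(n)$ extra comparisons is false. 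Your parenthetical fallback, element distinctness, is closer to viable (a string has no closed repeats iff no letter repeats, and the hard all-distinct instances have $0 = O(n)$ closed repeats), but it only applies when $\sigma \ge n$, whereas the theorem is claimed for every $\sigma \ge n^{\Omega(1)}$, including $\sigma = n^{\epsilon} \ll n$, where element distinctness is trivially false by pigeonhole.

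The paper instead reduces from an equality-type problem that is provably hard in the ordered-comparison model at every polynomial alphabet size: Ellert's \emph{Alphabet Set Testing} (given disjoint $\Sigma,\Sigma'$ of size $\sigma$ and $x \in (\Sigma\cup\Sigma')^{\bar n}$, decide whether $x \in \Sigma^{\bar n}$), which requires $\Omega(\bar n\log\sigma)$ comparisons. The string $s = b_1\cdots b_\sigma\, x_1 Z_1 x_2 Z_2 \cdots x_{\bar n} Z_{\bar n}\, a_1\cdots a_\sigma$ is built so that $x \in \Sigma^{\bar n}$ iff every $x_i$ is a closed repeat, and the distinct separators $Z_i$ over a fresh alphabet keep the total count at $O(n)$ --- this last device is essentially the ``frame'' idea you describe, and that part of your plan is sound. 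If you want to salvage your write-up, replace the sorting target with Alphabet Set Testing (or restrict the theorem to $\sigma \ge n$ and use element distinctness); the adversary/information-theoretic machinery you invoke must then be the one for equality determination under order comparisons, not the $\log(m!)$ bound for permutations.
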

\begin{proof}
	We devise a reduction from the \emph{Alphabet Set Testing} problem defined by Ellert~\cite{EllertPhD}: given two disjoint alphabets $\Sigma = \{a_1,\ldots,a_\sigma\}$, $\Sigma' = \{b_1,\ldots,b_\sigma\}$, and a string $x = x_1 x_2\cdots x_{\bar{n}} \in (\Sigma\cup\Sigma')^{\bar{n}}$, one must decide whether $x \in \Sigma^{\bar{n}}$. It was proved in~\cite{EllertPhD} that this problem requires $\Omega(\bar{n}\log\sigma)$ time in the worst case on the general alphabet. Let $\sigma \ge n^{1/(d-1)}$, for constant $d > 1$. Consider $s = b_1 \cdots b_\sigma x_1 Z_1 x_2 Z_2 \cdots x_{\bar{n}} Z_{\bar{n}} a_1\cdots a_\sigma$, where $Z_1, Z_2, \ldots, Z_{\bar{n}}$ are distinct strings, each of length $d-1$, over a new alphabet $\{\$_1,\ldots,\$_\sigma\}$. We will put $\bar{n} \le n$ so that $\sigma^{d-1} \ge n \ge \bar{n}$ to guarantee that there are at least $\bar{n}$ distinct strings $Z_i$ of length $d-1$. Since $x \in \Sigma^{\bar{n}}$ iff each substring $x_i$ in $s$ is a closed repeat, the algorithm computing all closed repeats for $s$ solves the Alphabet Set Testing. Since all $Z_1,\ldots,Z_{\bar{n}}$ are distinct, $s$ contains less than $2d|s| = O(|s|)$ closed repeats. Further, we have $|s| = d\bar{n} + 2\sigma$ so that $|s| = n$ whenever $\bar{n} = n/d - 2\sigma/d$. We assume that $\sigma \le n/4$ and set $\bar{n} = \lfloor n/d - 2\sigma/d\rfloor$, padding the string $s$ arbitrarily to make its length $n$. Since $\sigma \le n/4$, we have $n \ge \bar{n} \ge \lfloor n /(2d)\rfloor = \Omega(n)$. Thus, we obtain for the algorithm the lower bound $\Omega(\bar{n}\log\sigma) = \Omega(n\log\sigma) = \Omega(n\log n)$ attained on a string over an alphabet of size $\Theta(\sigma)$ with $O(n)$ closed repeats.
\end{proof}

Theorem~\ref{thm:lower-bound-alphabet} contrasts with the known results about runs, which are a particular case of closed repeats: all runs can be computed in $O(n)$ time on the general ordered alphabet, as was shown by Ellert and Fischer~\cite{EllertFischer}. It remains open whether the (right/left) closed repeats can be computed in $O(n + c)$ time over a linearly-sortable alphabet, where $c$ is the number of closed repeats, or in $O(n\log\sigma + c)$ time over the general ordered alphabet of size $\sigma$.

\section{Applications}
\label{sec:applications}

In this section we offer a number of applications for closed repeats. They all describe data structures with $O(n\log n)$ space that support a certain type of queries on substrings. Not all of them are strictly novel but we believe that our solutions are conceptually simpler than the state of the art, even if slightly inferior sometimes. We start with the following problem.


\begin{mdframed}
\begin{2-period-queries}
	Given a string $s$ of length $n$, construct a data structure with $O(n\log n)$ space that, for any substring $s[i\dd i{+}\ell)$, decides in $O(1)$ time whether it has a period $p$ such that $2p \le \ell$ and calculates $p$ if so.
\end{2-period-queries}
\end{mdframed}

This problem was introduced in \cite{CIKRRW}. In~\cite{KRRW} a solution with $O(1)$ query time and $O(n\log n)$ space was presented, based on an approach different from ours. In~\cite{KRRW2} a complicated data structure was used to reduce the space to $O(n)$, which beats our solution.

Observe that $p$ is a period of $s[i\dd i{+}\ell)$ iff $s[i{+}p\dd i{+}\ell)$ is a border of $s[i\dd i{+}\ell)$ (i.e., $s[i{+}p\dd i{+}\ell)$ is a prefix and suffix of $s[i\dd i{+}\ell)$). Thus, the problem is to find the longest border provided its length is at least $\ell/2$. We relate the borders to closed repeats as follows. If $s[i\dd i{+}t)$ is a right closed repeat whose next occurrence $s[q\dd q{+}t)$ is such that $q < i + \ell \le q + t$, then $s[q\dd i{+}\ell)$ is a border of $s[i\dd i{+}\ell)$. As a partial converse, if $s[i{+}p \dd i{+}\ell)$ is the longest border of $s[i\dd i{+}\ell)$ with length at least $\ell / 2$ (thus, $p$ is the minimal period), then there is a right closed repeat $s[i\dd i{+}t)$ whose next occurrence is at position $i + p$ and $i + \ell \le i + p + t$; this claim follows from the observation that $s[i{+}p \dd i{+}\ell)$ cannot occur at positions between $i$ and $i+p$ since otherwise the substring $s[i\dd i{+}p)$ would be equal to a non-trivial cyclic rotation of itself (see Fig.~\ref{fig:border-self-cover}), which would imply that $s[i\dd i{+}p) = v^k$ for some $v$ and $k > 1$ (see~\cite{Lothaire}), thus contradicting the minimality of $p$.

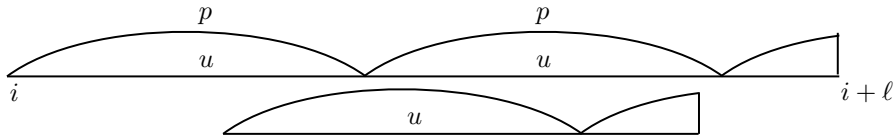
\begin{figure}[hbt]
	\tikzset{every picture/.style={line width=0.75pt}} 
	
	\begin{tikzpicture}[x=0.75pt,y=0.75pt,yscale=-1,xscale=1]
		
		\draw    (29.7,50.13) -- (449.7,50.13) ;
		\draw    (29.7,50.13) .. controls (69.7,20.13) and (169.7,20.13) .. (210.7,50.13) ;
		\draw    (209.7,50.13) .. controls (249.7,20.13) and (349.7,20.13) .. (390.7,50.13) ;
		\draw    (389.7,50.13) .. controls (429.7,20.13) and (529.7,20.13) .. (570.7,50.13) ;
		\draw  [color={rgb, 255:red, 255; green, 255; blue, 255 }  ,draw opacity=1 ][fill={rgb, 255:red, 255; green, 255; blue, 255 }  ,fill opacity=1 ] (450,26.13) -- (575.7,26.13) -- (575.7,79.13) -- (450,79.13) -- cycle ;
		\draw    (138.7,79.13) -- (378.7,79.13) ;
		\draw    (138.7,79.13) .. controls (178.7,49.13) and (278.7,49.13) .. (319.7,79.13) ;
		\draw    (318.7,79.13) .. controls (358.7,49.13) and (458.7,49.13) .. (499.7,79.13) ;
		\draw  [color={rgb, 255:red, 255; green, 255; blue, 255 }  ,draw opacity=1 ][fill={rgb, 255:red, 255; green, 255; blue, 255 }  ,fill opacity=1 ] (379,52) -- (516.7,52) -- (516.7,88.13) -- (379,88.13) -- cycle ;
		\draw    (378.7,58.13) -- (378.7,79.13) ;
		\draw    (448.7,28.13) -- (448.7,49.13) ;
		
		\draw (125,14) node [anchor=north west][inner sep=0.75pt]    {$p$};
		\draw (295,14) node [anchor=north west][inner sep=0.75pt]    {$p$};
		\draw (125,38) node [anchor=north west][inner sep=0.75pt]    {$u$};
		\draw (295,38) node [anchor=north west][inner sep=0.75pt]    {$u$};
		\draw (230,67) node [anchor=north west][inner sep=0.75pt]    {$u$};
		\draw (29.7,52.13) node [anchor=north west][inner sep=0.75pt]    {$i$};
		\draw (448.7,51.13) node [anchor=north west][inner sep=0.75pt]    {$i+\ell $};

	\end{tikzpicture}
	
	\caption{A schematic depiction of a string $s[i\dd i{+}\ell)$ with a period $p$ such that $2p \le \ell$. The ``hills'' depict the substrings $u = s[i\dd i{+}p) = s[i{+}p\dd i{+}2p) =\cdots$ As is shown, whenever the border $s[i{+}p\dd i{+}\ell)$ occurs at a position between $i$ and $i + p$, the string $u$ must be equal to a non-trivial cyclic rotation of itself.}\label{fig:border-self-cover}
\end{figure}

In our data structure, for each $i \in [1\dd n]$, we associate with $i$ the list of all right closed repeats starting at position $i$, stored in the increasing order of lengths; for each of the repeats, we also store its next occurrence. Further, for each $i$ and each $d \in [1\dd\log n]$, we store the pointer to the longest right closed repeat in the list with length at most $2^d$ (if any). Let us show how to perform the query on a substring $s[i\dd i{+}\ell)$. 

Denote by $t_1,\ldots, t_k$ and $q_1,\ldots, q_k$, respectively, the lengths and the positions of the next occurrences of all closed repeats in the list associated with $i$. By Lemma~\ref{lem:occ-increase}, we have $t_1 < \cdots < t_k$ and $q_1 < \cdots < q_k$.  Due to the above discussion, if the minimal period $p$ of $s[i\dd i{+}\ell)$ is such that $2p \le \ell$, then $p = q_x - i$ where $q_x$ is the smallest position in the list such that $i + \ell \le q_x + t_x$. We first compute in $O(1)$ time $d = \max\{d\in\mathbb{Z} \colon 2^d < \ell\}$ using either a specialized instruction or a precomputed table of size $n$. Then, using stored pointers, we compute $x' = \arg\max_{x\in[1\dd k]}\{t_x \le 2^{d-1}\}$ or $x' = 0$ if $t_x > 2^{d-1}$ for all $x$. We consecutively check the condition $i + \ell \le q_x + t_x$ for $x = x' + 1, x' + 2, \ldots$ We claim that it suffices to consider only $x = x' + 1$, $x' + 2$, $x' + 3$, $x' + 4$ and there is no period $p$ such that $2p \le \ell$ if the condition is not satisfied for these $x$.

Clearly, it makes sense to consider only such $x \in [1\dd k]$ for which $q_x \le i + \ell / 2$ and $t_x \ge \ell / 2$. The inequalities $t_{x'} \le 2^{d-1}$ and $2^{d-1} < \ell/2$ imply that $x$ must be greater than $x'$ to guarantee $t_x \ge \ell / 2$. Since $2^d \ge \ell / 2$, it remains to prove that $q_{x'+5} > i + 2^d$, which implies $q_{x'+5} > i + \ell/2$. By Lemma~\ref{lem:next-occ}, we obtain $q_{x'+3} > q_{x'+1} + t_{x'+1} > i + 2^{d-1}$ and, again due to Lemma~\ref{lem:next-occ}, $q_{x'+5} > q_{x'+3} + t_{x'+3} > i + 2^{d-1} + 2^{d-1} = i + 2^d$.



\begin{mdframed}
\begin{longest-substring-repeat}
	Given a string $s$ of length $n$ and a constant $\epsilon > 0$, construct a data structure with $O(n\log n)$ space that, for any substring $s[i\dd i{+}\ell)$, can compute its longest repeat in $O(\log^\epsilon n)$ time (any of the longest repeats if there are more than one possible answers).
\end{longest-substring-repeat}
\end{mdframed}

Despite its natural statement, it seems that this problem was not previously investigated.

Given a substring $s[i\dd i{+}\ell)$, suppose that its longest repeat has length $t$, and $s[p\dd p{+}t)$ and $s[q\dd q{+}t)$ are two leftmost occurrences of this repeat in  $s[i\dd i{+}\ell)$ so that $i \le p < q < q + t \le i + \ell$. Evidently, if $q + t < i + \ell$, then $s[p\dd p{+}t)$ is a right closed repeat in $s$ and $s[q\dd q{+}t)$ is its next occurrence. We build the following data structure to detect such repeats: for each closed repeat $s[p\dd p{+}t)$ in $s$ and its next occurrence $s[q\dd q{+}t)$, we form a 2D point $(p, q + t)$ with weight $t$; we then construct on all these points the orthogonal range maxima data structure by Nekrich~\cite{Nekrich} that, given a constant $\epsilon > 0$, takes linear space (in our case $O(n\log n)$ since we have $O(n\log n)$ points) and can report, for any range $[a,\infty]\times [-\infty, b]$, a point from the range with maximal weight in $O(\log^{\epsilon} n)$ time. Thus, the query on the range $[i,\infty]\times [-\infty, i{+}\ell]$ will find the longest repeat in the substring $s[i\dd i{+}\ell)$ provided this repeat occurs in $s[i\dd i{+}\ell{-}1)$ twice.

Symmetrically, if $s[p\dd p{+}t)$ and $s[q\dd q{+}t)$ are two rightmost occurrences of a longest repeat in $s[i\dd i{+}\ell)$ and $i < p < q < q + t \le i + \ell$ (note that $i < p$), then the substring of the reversed string $\cev{s} = s[n]\cdots s[2]s[1]$ that corresponds to  $s[q\dd q{+}t)$ is a right closed repeat in $\cev{s}$ whose next occurrence is the substring corresponding to  $s[p\dd p{+}t)$ and it can be detected by an analogous data structure on the reversed string $\cev{s}$.

It remains to consider the case when the longest repeat in $s[i\dd i{+}\ell)$ is a prefix of $s[i\dd i{+}\ell)$ and its next occurrence is a suffix of $s[i\dd i{+}\ell)$, i.e., it is the longest border that does not occur anywhere else in  $s[i\dd i{+}\ell)$. Let $s[i\dd i{+}t_1), \ldots, s[i\dd i{+}t_k)$ be the list of all right closed repeats at position $i$ such that $t_1 < \cdots < t_k$ and let $s[q_1\dd q_1{+}t_1), \ldots, s[q_k\dd q_k{+}t_k)$ be their respective next occurrences. By Lemma~\ref{lem:occ-increase}, $q_1 < \cdots < q_k$. Evidently, the described longest border is equal to $s[q_x\dd i{+}\ell)$ where $x = \arg\min_{x\in [1\dd k]}\{i + \ell \le q_x + t_x\}$. By equipping the list $q_1 + t_1,\ldots, q_k + t_k$ with the van Emde Boas structure~\cite{vanEmdeBoas}, one can compute $x$ in $O(\log\log n)$ time.

Thus, the data structure obtains three candidates for the longest repeat in $s[i\dd i{+}\ell)$, according to the three described cases, and it reports the longest one of them.

\medspace

The next problem requires the following definitions for its statement. For a given string $s$, its \emph{LZ77 parsing} (Lempel--Ziv \cite{LZ77}) is the decomposition $s = f_1 f_2\cdots f_z$ built from left to right by the following greedy procedure: if a prefix $s[1\dd p) = f_1 f_2 \cdots f_{m-1}$ is already processed, then the string $f_m$ (called a \emph{phrase}) is either the letter $s[p]$ that does not occur in $s[1\dd p)$ or is the longest substring that starts at position $p$ and has an occurrence at position $q < p$. We call the parsing \emph{rightmost} if, together with each phrase $f_m$, it provides the maximal position $q < p$ at which $f_m$ occurs (if any).

\begin{mdframed}
\begin{substring-compression-queries}
	Given a string $s$ of length $n$, construct a data structure with $O(n\log n)$ space that can compute the rightmost LZ77 parsing of any substring $s[i\dd i{+}\ell)$ in $O(z\log\log\ell)$ time, where $z$ is the number of phrases in the parsing.
\end{substring-compression-queries}
\end{mdframed}

The LZ77 parsing~\cite{LZ77} is one of the key structures in data compression. Its rightmost variant is in focus of research \cite{AmirLandauUkkonen,BelazzouguiPuglisi,EllertFischerPedersen,Larsson,SumiyoshiMienoInenaga} due to its nice properties~\cite{FerraginaNittoVenturini,KosolobovBitOpt,KVNP}. Substring compression queries were first considered in \cite{CormodeMuthukrishnan2}. In~\cite{KKFL} a data structure was proposed with $O(z\log\log n)$ query time and $O(n\log\log n)$ space (the time/space is like this when an up-to-date predecessor data structure~\cite{Zhou} is used in it; see also \cite{KRRW2}). However, this data structure does not necessarily return the rightmost parsing. Apparently, our solution is the first one for the rightmost LZ77. Also, our query time $O(z\log\log\ell)$ is faster than $O(z\log\log n)$ when $\ell \le 2^{2^{o(\log\log n)}}$. We pay for this by spending more space, $O(n\log n)$.

For our data structure, we associate with each $p \in [1\dd n]$ the list of all right closed repeats whose next occurrence is at position $p$; we store these repeats in the increasing order of lengths. Let $s[q_1\dd q_1{+}t_1), \ldots, s[q_k\dd q_k{+}t_k)$ be all repeats in the list associated with $p$ and $t_1 \le \cdots \le t_k$. Since each right closed repeat $s[q_x\dd q_x{+}t_x)$ cannot occur at positions between $q_x$ and $p$, we have $q_1 > \cdots > q_k$ and $t_1 < \cdots < t_k$ (cf. Lemma~\ref{lem:occ-increase}). This observation suggests that all these lists can be straightforwardly constructed in one left-to-right pass on all right closed repeats. We equip the list $q_1,\ldots,q_k$ with a predecessor data structure that, for any $i$, can compute $\min\{q_x \ge i\}$ in $O(\log\log(p - i))$ time: to this end, the list is split into $\log n$ sublists such that the $t$th sublist, for $t = 1,2,\ldots$, contains only positions $q_x$ with $2^{t-1} \le p - q_x < 2^t$, and every sublist is equipped with the van Emde Boas structure~\cite{vanEmdeBoas}. Thus, for any $i < p$, the query for $\min\{q_x \ge i\}$ first determines in $O(1)$ time $t$ such that $2^{t-1} \le p - i < 2^t$ (using either a specialized instruction or a precomputed table of size $n$) and then computes the answer in $O(\log\log 2^t) =  O(\log\log(p - i))$ time using the van Emde Boas structure. We equip the list $t_1,\ldots,t_k$ with an analogous data structure that, for any $\ell$, can compute $\min\{t_x \ge \ell\}$ in $O(\log\log\ell)$ time.

Given a substring $s[i\dd i{+}\ell)$, we construct its LZ77 parsing from left to right. Suppose that a prefix $s[i\dd i{+}p)$ is already parsed into phrases $f_1 f_2 \cdots f_{m-1}$ and we are to find the phrase $f_m$ that starts at position $p$. Assume that the letter $s[p]$ occurs in $s[i\dd p)$. Let $s[q_1\dd q_1{+}t_1), \ldots, s[q_k\dd q_k{+}t_k)$ be all repeats in the list associated with $p$. Denote by $s[q\dd q{+}|f_m|)$ the rightmost occurrence of $f_m$ preceding position $p$. Clearly, $s[q\dd q{+}|f_m|)$ is a prefix of a right closed repeat and its next occurrence is at position $p$; moreover, if $f_m$ is not the last phrase in the parsing, then $s[q\dd q{+}|f_m|)$ is itself the right closed repeat. Hence, $q$ must be in the list $q_1,\ldots,q_k$. Further, $i \le q < p$ and, provided $f_m$ is not the last phrase, the list cannot contain a closed repeat $s[q_x \dd q_x{+}t_x)$ such that $i \le q_x < q$: if it were the case, then $t_x > |f_m|$ and there would be a substring of length $t_x$ at position $p$ that has an occurrence at $q_x$. Thus, using a predecessor query, we compute in $O(\log\log\ell)$ time $q = q_x$ and $|f_m| = t_x$ where $x = \arg\min_{x\in [1\dd k]}\{q_x \ge i\}$. Observe that $q_1 < i$ iff $s[p]$ does not occur in $s[i\dd p)$; we put $f_m = s[p]$ in this case. One caveat remains: if thus calculated length $t_x$ exceeds $i + \ell - p$, then we ``truncate'' $f_m$ to length $i + \ell - p$ so that $f_m$ is last in the parsing. However, in this case, the found ``source'' $q = q_x$ for $f_m$ is not necessarily the rightmost one. We compute the rightmost ``source'' $q = q_y$ for the last phrase by one predecessor query $y = \arg\min_{y\in [1\dd k]}\{t_y \ge |f_m|\}$.

\section{Open Problems}
\label{sec:conclusion}

The presented results naturally inspire a number of open problems.

\begin{enumerate}
	\item Can we find all (right/left) closed repeats in $O(n\log\sigma + c)$ time over a general ordered alphabet of size $\sigma$, where $c$ is the number of found closed repeats?
	\item Can we reduce the time to $O(n + c)$ on linearly-sortable alphabets?
	\item Can the query time in the longest substring repeat problem be $O(\log\log n)$? It is natural to expect the improvement due to the specificity of range queries in the current solution.
	\item The $O(n\log n)$ space required by closed repeats might be a bottleneck in many applications. As a partial cure, the following conjecture, if true, might help to devise more light-weight solutions for some problems: given $d \ge \Omega(\log^2 n)$, the number of closed repeats with length at least $d$ is $O(n)$.
	\item In a similar vein, we conjecture that, given a set of $k$ positions $P \subseteq [1\dd n]$, there are at most $O(n + k\log n)$ right closed repeats starting at positions from $P$.
	\item The algorithm from \cite{BDFP2} that can be adapted to compute the (right/left) closed repeats is practical and relatively easily implementable. However, it requires the construction of the suffix tree for its work, which is not a problem \emph{per se} but it suggests that, maybe, there can be a more direct algorithm that constructs the closed repeats without first building an indexing data structure like the suffix tree. We note, however, that, due to Theorem~\ref{thm:lower-bound-alphabet}, a certain kind of indexing is still unavoidable.
\end{enumerate}

\paragraph{Acknowledgement}
The author wishes to thank Gabriele Fici for the proposal of the problem about the number of maximal closed substrings at StringMasters 2024 in Fukuoka, Japan, which led to the present work, and for verifying the proof of Theorem~\ref{thm:lower}, which solved the proposed problem.

\bibliographystyle{elsart-num-sort}
\bibliography{refs}

\begin{thebibliography}{10}
\expandafter\ifx\csname url\endcsname\relax
  \def\url#1{\texttt{#1}}\fi
\expandafter\ifx\csname urlprefix\endcsname\relax\def\urlprefix{URL }\fi

\bibitem{AmirLandauUkkonen}
A.~Amir, G.~M. Landau, E.~Ukkonen, Online timestamped text indexing,
  Information Processing Letters 82~(5) (2002) 253--259.

\bibitem{BadkobehCrochemoreToopsuwan}
G.~Badkobeh, M.~Crochemore, C.~Toopsuwan, Computing the maximal-exponent
  repeats of an overlap-free string in linear time, in: Proc. String Processing
  and Information Retrieval (SPIRE), vol. 7608 of LNCS, Springer, 2012.

\bibitem{BDFP2}
G.~Badkobeh, A.~De~Luca, G.~Fici, S.~Puglisi, Maximal closed substrings, arXiv
  preprint arXiv:2209.00271.

\bibitem{BDFP}
G.~Badkobeh, A.~De~Luca, G.~Fici, S.~Puglisi, Maximal closed substrings, in:
  Proc. String Processing and Information Retrieval (SPIRE), vol. 13617 of
  LNCS, Springer, 2022.

\bibitem{BadkobehFiciLiptak}
G.~Badkobeh, G.~Fici, Z.~Lipt{\'a}k, On the number of closed factors in a word,
  in: Proc. Language and Automata Theory and Applications (LATA), vol. 8977 of
  LNCS, Springer, 2015.

\bibitem{BannaiEtAl}
H.~Bannai, T.~I, S.~Inenaga, Y.~Nakashima, M.~Takeda, K.~Tsuruta, The
  “runs” theorem, SIAM Journal on Computing 46~(5) (2017) 1501--1514.

\bibitem{BelazzouguiPuglisi}
D.~Belazzougui, S.~J. Puglisi, Range predecessor and {L}empel--{Z}iv parsing,
  in: Proc. the ACM-SIAM Symposium on Discrete Algorithms (SODA), SIAM, 2016.

\bibitem{BlumerEtAl}
A.~Blumer, J.~Blumer, D.~Haussler, R.~McConnell, A.~Ehrenfeucht, Complete
  inverted files for efficient text retrieval and analysis, Journal of the ACM
  34~(3) (1987) 578--595.

\bibitem{CormodeMuthukrishnan2}
G.~Cormode, S.~Muthukrishnan, Substring compression problems., in: Proc. the
  ACM-SIAM Symposium on Discrete Algorithms (SODA), vol.~5, Citeseer, 2005.

\bibitem{CIKRRW}
M.~Crochemore, C.~S. Iliopoulos, M.~Kubica, J.~Radoszewski, W.~Rytter,
  T.~Wale{\'n}, Extracting powers and periods in a word from its runs
  structure, Theoretical Computer Science 521 (2014) 29--41.

\bibitem{CrochemoreKolpakovKucherov}
M.~Crochemore, R.~Kolpakov, G.~Kucherov, Optimal bounds for computing
  $\alpha$-gapped repeats, Information and Computation 268 (2019) 104434.

\bibitem{EllertPhD}
J.~Ellert, Efficient string algorithmics across alphabet realms, {PhD} thesis,
  der Technischen Universität Dortmund an der Fakultät für Informatik
  (2024).

\bibitem{EllertFischer}
J.~Ellert, J.~Fischer, Linear time runs over general ordered alphabets, in:
  N.~Bansal, E.~Merelli, J.~Worrell (eds.), Proc. International Colloquium on
  Automata, Languages, and Programming (ICALP), vol. 198 of Leibniz
  International Proceedings in Informatics (LIPIcs), Schloss Dagstuhl --
  Leibniz-Zentrum f{\"u}r Informatik, Dagstuhl, Germany, 2021.

\bibitem{EllertFischerPedersen}
J.~Ellert, J.~Fischer, M.~R. Pedersen, New advances in rightmost
  {L}empel-{Z}iv, in: Proc. String Processing and Information Retrieval
  (SPIRE), vol. 14240 of LNCS, Springer, 2023.

\bibitem{FerraginaNittoVenturini}
P.~Ferragina, I.~Nitto, R.~Venturini, On the bit-complexity of {L}empel--{Z}iv
  compression, SIAM Journal on Computing 42~(4) (2013) 1521--1541.

\bibitem{Fici}
G.~Fici, et~al., Open and closed words, Bulletin of EATCS 3~(123) (2017) 1--11.

\bibitem{GIIKM}
P.~Gawrychowski, T.~I, S.~Inenaga, D.~K\"{o}ppl, F.~Manea, Efficiently finding
  all maximal $\alpha$-gapped repeats, in: Proc. Symposium on Theoretical
  Aspects of Computer Science (STACS), vol.~47 of Leibniz International
  Proceedings in Informatics (LIPIcs), Schloss Dagstuhl -- Leibniz-Zentrum
  f{\"u}r Informatik, 2016.

\bibitem{KKFL}
O.~Keller, T.~Kopelowitz, S.~L. Feibish, M.~Lewenstein, Generalized substring
  compression, Theoretical Computer Science 525 (2014) 42--54.

\bibitem{KRRW2}
T.~Kociumaka, J.~Radoszewski, W.~Rytter, T.~Wale{\'n}, Efficient data
  structures for the factor periodicity problem, in: Proc. String Processing
  and Information Retrieval (SPIRE), vol. 7608 of LNCS, Springer, 2012.

\bibitem{KRRW}
T.~Kociumaka, J.~Radoszewski, W.~Rytter, T.~Wale{\'n}, Internal pattern
  matching queries in a text and applications, in: Proc. the ACM-SIAM Symposium
  on Discrete algorithms (SODA), SIAM, 2014.

\bibitem{KolpakovKucherov}
R.~Kolpakov, G.~Kucherov, Finding maximal repetitions in a word in linear time,
  in: Proc. Foundations of Computer Science (FOCS), IEEE, 1999.

\bibitem{KPPK}
R.~Kolpakov, M.~Podolskiy, M.~Posypkin, N.~Khrapov, Searching of gapped repeats
  and subrepetitions in a word, in: Proc. Combinatorial Pattern Matching (CPM),
  vol. 8486 of LNCS, Springer, 2014.

\bibitem{KPPK2}
R.~Kolpakov, M.~Podolskiy, M.~Posypkin, N.~Khrapov, Searching of gapped repeats
  and subrepetitions in a word, Journal of Discrete Algorithms 46 (2017) 1--15.

\bibitem{KosolobovLZRuns}
D.~Kosolobov, {L}empel--{Z}iv factorization may be harder than computing all
  runs, in: Proc. Symposium on Theoretical Aspects of Computer Science (STACS),
  vol.~30 of LIPIcs, Schloss Dagstuhl--Leibniz-Zentrum fuer Informatik, 2015.

\bibitem{KosolobovBitOpt}
D.~Kosolobov, Relations between greedy and bit-optimal {LZ77} encodings, in:
  Proc. Symposium on Theoretical Aspects of Computer Science (STACS), vol.~96
  of Leibniz International Proceedings in Informatics (LIPIcs), Schloss
  Dagstuhl -- Leibniz-Zentrum f{\"u}r Informatik, 2018.

\bibitem{KVNP}
D.~Kosolobov, D.~Valenzuela, G.~Navarro, S.~Puglisi, {L}empel-{Z}iv-like
  parsing in small space, Algorithmica 82~(11) (2020) 3195--3215.

\bibitem{Larsson}
N.~J. Larsson, Most recent match queries in on-line suffix trees, in: Proc.
  Combinatorial Pattern Matching (CPM), vol. 8486 of LNCS, Springer, 2014.

\bibitem{Lothaire}
M.~Lothaire, Combinatorics on words, Cambridge University Press, 1997.

\bibitem{MainLorentz}
M.~G. Main, R.~J. Lorentz, An {O}(n log n) algorithm for finding all
  repetitions in a string, Journal of Algorithms 5~(3) (1984) 422--432.

\bibitem{Nekrich}
Y.~Nekrich, New data structures for orthogonal range reporting and range minima
  queries, in: Proc. the ACM-SIAM Symposium on Discrete Algorithms (SODA),
  SIAM, 2021.

\bibitem{SumiyoshiMienoInenaga}
W.~Sumiyoshi, T.~Mieno, S.~Inenaga, Faster and simpler online/sliding rightmost
  {L}empel-{Z}iv factorizations, in: Proc. String Processing and Information
  Retrieval (SPIRE), vol. 14899 of LNCS, Springer, 2024.

\bibitem{vanEmdeBoas}
P.~van Emde~Boas, Preserving order in a forest in less than logarithmic time,
  in: Proc. Symposium on Foundations of Computer Science (SFCS), IEEE, 1975.

\bibitem{Zhou}
G.~Zhou, Two-dimensional range successor in optimal time and almost linear
  space, Information Processing Letters 116~(2) (2016) 171--174.

\bibitem{LZ77}
J.~Ziv, A.~Lempel, A universal algorithm for sequential data compression, IEEE
  Transactions on Information Theory 23~(3) (1977) 337--343.

\end{thebibliography}

\end{document}